\documentclass{amsart}
\usepackage{amsfonts}
\usepackage{amsmath}
\usepackage{amssymb}
\usepackage{amsthm}
\usepackage{color}
\usepackage{comment}

\usepackage[foot]{amsaddr}

\usepackage{mathrsfs}

\usepackage{pdfpages}

\usepackage{psfrag}
\usepackage{graphicx}

\newtheorem{Thm}{Theorem}
\newtheorem{Lem}[Thm]{Lemma}

\theoremstyle{definition}
\newtheorem{Def}[Thm]{Definition}

\theoremstyle{remark}
\newtheorem{Rem}[Thm]{Remark}

\numberwithin{equation}{section}


\title[$k$-proportionality]{Between proportionnality and envy-freeness: $k$-proportionality}

\date{\today}

\author[G.~Ch\`eze]{Guillaume Ch\`eze}
\address{Guillaume Ch\`eze: Institut de Math\'ematiques de Toulouse\\
Universit\'e Paul Sabatier \\
118 route de Narbonne\\
31 062 TOULOUSE cedex 9, France}
\email{guillaume.cheze@math.univ-toulouse.fr}



\begin{document}


\begin{abstract}
This article deals with the cake cutting problem. In this setting, there exists two notions of fair division: proportional division (when there are $n$ players, each player thinks to get at least $1/n$ of the cake) and envy-free division (each player wants to keep his or her share because he or she does not envy the portion given to another player). Some results are valid for proportional division but not for envy-free division.\\
Here, we introduce and study a scale between the proportional division and the envy-free division. The goal is to understand where is the gap between statements about proportional division and envy-free division. This scale comes from the notion introduced in this article: $k$-proportionality. When $k=n$ this notion corresponds to the proportional division and when $k=2$ it corresponds to envy-free division.\\
With $k$-proportionality we can understand where some difficulties  in fair division lie. First, we show that there are situations in which there is no $k$-proportional and equitable division of a pie with connected pieces when $k\leq n-1$. This result was known only for envy-free division, ie $k=2$.\\
Next, we prove that there are situations in which there is no Pareto-optimal $k$-proportional division of a cake with connected pieces when $k\leq n-1$. This result was known only for $k=2$.\\
These theorems say that we can get an impossibility result even if we do not consider an envy-free division but a weaker notion.\\
Finally, $k$-proportionality allows to give a generalization with a uniform statement of theorems about strong envy-free and strong proportional divisions. \\

\smallskip
\noindent \textsc{Keywords:} fair division, cake cutting, linear algebra.
\end{abstract}

\maketitle

\section*{Introduction}
In this article we study the fair division of an  heterogeneous good between different players or agents. This good can be for example: a cake, land, time or computer memory. Here, we are going to suppose that we want to share a cake or a pie.\\
Cakes will be modeled by the interval $[0,1]$ and cake portions by the intervals $[x_i,x_{i+1}] \subset [0,1]$.
Pies will be modeled by the circle $S^1$ and we will represent the circle by the interval $[0,1]$ where the  two extremities have been identified.\\

  The problem of fair division has been formulated in a scientific way by Steinhaus in 1948, see \cite{Steinhaus}. Nowadays, there exist several papers, see e.g. \cite{AzizMackenzie, BJK, BramsTaylorarticle, DubinsSpanier, EdmondsPruhs, EvenPaz, Pikhurko}, and books about this topic, see e.g. \cite{Barbanel, BramsTaylor, Procacciachapter, RobertsonWebb}. These results appear in the mathematics, economics, political science, artificial intelligence and computer science literature. Recently, the cake cutting problem has been studied intensively by computer scientists for solving resource allocation problems in multi agents systems, see e.g.~\cite{Chevaleyre06,Sziklai,Dynamic}. \\

 Throughout this article, we will denote by $X$ the cake or the pie that we want to share. We also consider $n$ players and we associate to each player a non-atomic probability measure $\mu_i$ on $X$. Thus  $\mu_i(X)=1$ for all $i$. Moreover, we suppose that the measures $\mu_i$ are  absolutely continuous with respect to the Lebesgue measure.\\
These measures represent the preferences, the utility functions of the players.  This means that if $P$ is a portion of the cake, then $\mu_i(P)$ represents the level of satisfaction obtained by the $i$-th player when he or she gets $P$.\\
 The problem in this situation is to get a fair division of $X=X_1\sqcup \ldots \sqcup X_n$, where the $i$-th player gets $X_i$.\\

When we study fair divisions, we have to define   ``fair" precisely. Indeed, several notions exist.
\begin{enumerate}
\item[$\bullet$] A division is \emph{proportional} when each player gets at least $\mu_i(X)/n$ of the cake for his or her measure. More precisely, we say that a division is proportional when for all $i$, we have 
$$\mu_i(X_i) \geq \mu_i(X)/n.$$
\item[$\bullet$] We say that a division is \emph{strong proportional} when for all $i$, we have 
$$\mu_i(X_i) > \mu_i(X)/n.$$

\item[$\bullet$] A division is \emph{envy-free} when each player thinks his or her share is better than the others'. More precisely, we say that a division is envy-free when for all $i \neq j$, we have 
$$\mu_i(X_i) \geq  \mu_i(X_j).$$
 
\item[$\bullet$] We say that a division is \emph{strong envy-free} when for all $i \neq j$, we have 
$$\mu_i(X_i) >  \mu_i(X_j).$$

\item[$\bullet$] A division is \emph{equitable} when the level of satisfaction obtained by each player is the same. More precisely, we say that a division is equitable when for all $i\neq j$, we have 
$$\mu_i(X_i)=\mu_j(X_j).$$
\item[$\bullet$] A division is \emph{exact} when each player thinks that each portion has a value equal to $1/n.$  More, precisely, we say that a division is exact when for all $i \neq j$, we have 
$$\mu_i(X_j)=\dfrac{1}{n}.$$

\item[$\bullet$] We say that an allocation $A_1$ ($X=\sqcup_{i=1}^n \tilde{X}_i$) \emph{Pareto-dominates} an allocation $A_2$ ($X=\sqcup_{i=1}^n X_i$) if at least one player feels that $A_1$  is better than $A_2$ and no player feels that $A_1$   is worse than $A_2$. More precisely:
$$\exists j : \mu_j(\tilde{X}_j) >\mu_j(X_j) \textrm{ and }\forall i: \mu_i(\tilde{X}_i) \geq  \mu_i(X_i).$$
   \end{enumerate}

 As $\mu_i(X)=1$, we can prove easily that if a division is envy-free then this division is also porportional. Thus, an envy-free division is more demanding than a proportional division. For example, it is possible to get an equitable and proportional division where each $X_j$ is an interval, see \cite{Cechexistence}. However, it is impossible to obtain an equitable and envy-free division with connected pieces of cake, see \cite{BJK2006}.\\

In this article, we introduce and study a scale between the proportional division and the envy-free division. This notion is $k$-proportionality. 

\begin{Def} Let $k$ be an integer such that $2 \leq k \leq n$.
We say that a division \linebreak $X=X_1 \sqcup \ldots \sqcup X_n$ is $k$-proportional when for all $J\subset \{1,\ldots,n\}$ such that $|J|=k$ and all $i \in J$ we have
$$\mu_i(X_i) \geq \dfrac{\sum_{j \in J} \mu_i(X_j)}{k}.$$
\end{Def}

This definition means that a division is $k$-proportional if for all subset \mbox{$J=\{j_1, \ldots,j_k\}$} with $k$ players the division of $X_{j_1} \sqcup \ldots \sqcup X_{j_k}$ is proportional.

\begin{Rem}$\,$
\begin{itemize}
\item[-] An $n$-proportional division is a proportional division since \linebreak \mbox{$\sum_{j=1}^n \mu_i(X_j)=\mu_i(X)=1$}.\\
\item[-] A $2$-proportional division is an envy-free division since
$$\mu_i(X_i) \geq \dfrac{\mu_i(X_i)+\mu_i(X_j)}{2} \iff \mu_i(X_i) \geq \mu_i(X_j).$$
\end{itemize}
\end{Rem}

In the next section, we will see that if  a division is $k$-proportional then it is also $(k+1)$-proportional. \\

As we have mentioned,  some results are valid for proportional division but not for envy-free division, $k$-proportionality allows to understand where the difficulty arises. Indeed, in Section~\ref{sec:equitable},  we are going to prove that it is impossible to get a $k$-proportional and equitable division of a pie with connected pieces when $k\leq n-1$. Thus, we have an impossibility result at every level of the scale except when we consider $k=n$, ie proportional division. Therefore, it is not necessary to consider envy-free divisions to get an impossibility result.
This generalises and improves the result given by J.~Barbanel, S.~Brams and W.~Stromquist in \cite{Barbanel_Brams_Stromquist}.
In Section~\ref{sec:pareto}, we are also going to give the same kind of generalization of an impossibility result about connected envy-free division and Pareto optimality.\\
At last, in Section~\ref{sec:strong}, we give a characterization of strong $k$-proportionality. If $k=2$ or $k=n$, we recover well known results about strong envy-free divisions and strong proportional divisions. This means that with $k$-proportionality we have a general framework to study fair division.\\

\textbf{Related results}:\\
Considering the average $\big(\sum_{j \in J} \mu_i(X_j)\big)/k$ in order to study a sharing problem is not new. Indeed, in \cite{Segal-Halevi_families}, E.~Segal-Halevi and S.~Nitzan use this average to define whether a sharing is fair for a group. An allocation is proportional on average when \mbox{$\big(\sum_{j \in J} \mu_i(X_j)\big)/k \geq t_J$} where $t_J$ is a positive number representing the entitlement of the group $J$.
Here, the problem under study is different; we are not studying equitable sharing for a group but for an agent. However, to define $k$-proportionality, we compare the value received by an agent to the average value received by the group to which he or she belongs.\\
In \cite{CHB}, the authors introduce two  hierarchies of fairness notions. The first one is Complement Harmonically Bounded (CHB) and the second Complement Linearly Bounded (CLB).
An allocation is CHB-$k$ if for any subset of agents $S$ of size at most $k$, and every agent $i \in  S$, the value given by the $i$-th agent for the union of
all pieces allocated to agents outside of $S$ is at most  $\frac{n-|S|}{n-|S|+1}$; for CLB-$k$ allocations, the upper bound becomes $\frac{n-|S|}{n}$.
The idea behind these definitions is that each agent believes that each group to which he or she does not belong receives a limited part of the cake. In this hierarchy, the order is not the same as the one introduced with $k$-proportionality. 
Indeed, CHB-$1$ and CLB-$1$ coincides with proportional division,  and CHB-$k$ (resp. CLB-$k$) is stronger than CHB-$(k-1)$ (resp. CLB-$(k-1)$). However, envy-free is stronger than CHB-$n$ and CLB-$n$.\\
$k$-proportionality, CHB-k and CLB-k are different scales.
With $k$-proportionality we compare the value obtained by an agent to the average value obtained by $k$ agents. With CHB-k or CHL-k, we compare the average value obtained by a group of agents to a constant depending on the size of the group.\\
In Sections~\ref{sec:equitable} and \ref{sec:pareto}, we will use $k$-proportionality to extend the results of \cite{Barbanel_Brams_Stromquist}. In \cite{BJK}, S.~Brams, M.~Jones, and C.~Klamler proved another result similar to those given in \cite{Barbanel_Brams_Stromquist}. In \cite{BJK}, the authors proved that there exist measures for which it is impossible to obtain an equitable, envy-free, and Pareto-optimal partition of a cake. In this result, $X_i$ is not assumed to be connected.
Here, the impossibility results will only concern cases where $X_i$ is connected. In Section 2, we generalize the impossibility of having a connected, envy-free, and equitable division of a pie. In Section 3, we generalize the impossibility of having an envy-free and Pareto-optimal partition of a cake. We therefore do not consider the general case where $X_i$ can consist of a finite number of intervals, but our theorems do not assume that the three properties: envy-free, equitable, and Pareto-optimal are all satisfied.

\section{First properties}

The following lemma shows that we can also consider $k$-proportionality as a measure of envy-freeness.

\begin{Lem}\label{lem:J'}
A division $X=X_1 \sqcup \ldots \sqcup X_n$ is $k$-proportional if and only if
for all $i$ and all subsets $J' \subset \{1, \ldots, n\} \setminus \{i\}$ such that $|J'|=k-1$, we have
$$\mu_i(X_i) \geq \dfrac{\sum_{j \in J'}\mu_i(X_j)}{k-1}.$$
\end{Lem}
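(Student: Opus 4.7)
The plan is to observe that the equivalence is essentially an algebraic rearrangement together with a bijection between the index sets on each side. No substantive argument is required beyond unpacking the definition.

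First I would start with the $k$-proportionality condition and, for each subset $J$ of size $k$ containing a fixed index $i$, isolate the term $\mu_i(X_i)$ from the sum on the right-hand side. Multiplying by $k$ and subtracting $\mu_i(X_i)$ gives
$$(k-1)\,\mu_i(X_i)\;\geq\;\sum_{j\in J\setminus\{i\}}\mu_i(X_j),$$
which is exactly the stated inequality applied to the subset $J' := J \setminus \{i\}$. Since $|J'| = k-1$ and $J' \subset \{1,\ldots,n\}\setminus\{i\}$, this proves the forward direction.

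For the converse, I would note that the map $J \mapsto J\setminus\{i\}$ is a bijection between the $k$-subsets of $\{1,\ldots,n\}$ containing $i$ and the $(k-1)$-subsets of $\{1,\ldots,n\}\setminus\{i\}$, with inverse $J' \mapsto J'\cup\{i\}$. Given a $(k-1)$-subset $J'\subset\{1,\ldots,n\}\setminus\{i\}$, setting $J := J'\cup\{i\}$ and reversing the computation above shows that the $k$-proportional inequality for $J$ and $i$ is equivalent to the inequality of the lemma for $J'$ and $i$. Since the definition of $k$-proportionality quantifies over all $J$ of size $k$ and all $i\in J$, the two formulations coincide.

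There is no real obstacle here; the only point worth being careful about is the quantification, i.e.\ making sure that ranging over all pairs $(J,i)$ with $i\in J$ and $|J|=k$ corresponds exactly to ranging over all pairs $(i,J')$ with $i\in\{1,\ldots,n\}$ and $J'\subset\{1,\ldots,n\}\setminus\{i\}$, $|J'|=k-1$, which is precisely the bijection above.
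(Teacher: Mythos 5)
Your proposal is correct and follows essentially the same route as the paper: the paper's proof is exactly the chain of equivalences obtained by splitting off the $\mu_i(X_i)$ term from the sum over $J=J'\cup\{i\}$ and rearranging. Your explicit remark about the bijection between pairs $(J,i)$ and pairs $(i,J')$ is a point the paper leaves implicit, but it adds nothing beyond bookkeeping.
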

This lemma says that in a $k$-proportional division a player does not envy the average value obtained by $k-1$ players.

\begin{proof}
With the notations introduced previously we have:
\begin{eqnarray*}
\mu_i(X_i) \geq \dfrac{ \sum_{j \in J} \mu_i(X_j)}{k} & \iff & \mu_i(X_i) \geq \dfrac{ \sum_{j \in J'} \mu_i(X_j)}{k}+\dfrac{\mu_i(X_i)}{k}\\
& \iff & \mu_i(X_i)\Big( 1 -\dfrac{1}{k}\Big) \geq \dfrac{ \sum_{j \in J'} \mu_i(X_j)}{k}\\
& \iff & \mu_i(X_i)  \geq \dfrac{ \sum_{j \in J'} \mu_i(X_j)}{k-1} \quad \quad \qedhere 
\end{eqnarray*}
\end{proof}

We have recalled that proportionality implies envy-freeness, the following lemma extends this property.

\begin{Lem}\label{lem:k-k+1}
If a division is $k$-proportional then it is also $k+1$-proportional.
\end{Lem}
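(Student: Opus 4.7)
The plan is to exploit the reformulation provided by Lemma~\ref{lem:J'}, which makes the averaging cleaner because the "self" term $\mu_i(X_i)$ is removed from the sum. So instead of working with subsets $J$ of size $k$ containing $i$, I will work with subsets $J'$ of size $k-1$ disjoint from $i$ (and target size $k$ disjoint from $i$ to get $(k+1)$-proportionality).

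Concretely, fix an arbitrary player $i$ and an arbitrary subset $J'' \subset \{1,\ldots,n\}\setminus\{i\}$ with $|J''|=k$. By Lemma~\ref{lem:J'}, showing $(k+1)$-proportionality amounts to establishing
\[
\mu_i(X_i) \;\geq\; \frac{1}{k}\sum_{j\in J''}\mu_i(X_j).
\]
For each $\ell \in J''$, set $J'_\ell := J''\setminus\{\ell\}$, which is a subset of $\{1,\ldots,n\}\setminus\{i\}$ of size $k-1$. The hypothesis of $k$-proportionality, via Lemma~\ref{lem:J'}, applied to each $J'_\ell$ gives
\[
\mu_i(X_i) \;\geq\; \frac{1}{k-1}\sum_{j\in J'_\ell}\mu_i(X_j).
\]

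Now I would sum these $k$ inequalities over $\ell \in J''$ and use the key combinatorial observation that every fixed $j\in J''$ belongs to $J'_\ell$ for exactly $k-1$ values of $\ell$ (namely all $\ell \neq j$). This yields
\[
k\,\mu_i(X_i) \;\geq\; \frac{1}{k-1}\sum_{\ell\in J''}\sum_{j\in J'_\ell}\mu_i(X_j) \;=\; \frac{k-1}{k-1}\sum_{j\in J''}\mu_i(X_j),
\]
and dividing by $k$ gives exactly the desired inequality. Applying Lemma~\ref{lem:J'} in the other direction then concludes that the division is $(k+1)$-proportional.

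There is no real obstacle here; the only subtlety is choosing the right formulation of $k$-proportionality so that the averaging is a straightforward double-counting argument. Working directly from the original definition (subsets of size $k$ containing $i$) also works, but the term $\mu_i(X_i)$ would appear with a different multiplicity than the other $\mu_i(X_j)$, requiring extra bookkeeping; using Lemma~\ref{lem:J'} avoids this entirely.
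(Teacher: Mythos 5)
Your proof is correct and is essentially the paper's argument: both delete one element at a time from the target set, apply $k$-proportionality to each of the $k$ resulting subsets, sum, and double-count the occurrences of each $\mu_i(X_j)$. The only difference is that you route the argument through Lemma~\ref{lem:J'} to strip out the self term $\mu_i(X_i)$ first, whereas the paper works directly with subsets containing $i$ and absorbs that term at the end; the combinatorial core is identical.
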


\begin{proof}
Let $X=X_1 \sqcup \ldots \sqcup X_n$ be a $k$-proportional division.\\
Let  $K=\{i,j_1,\ldots,j_{k-1},j_k\}$, where $j_l \neq j_m$ if $l\neq m$ and $j_l \neq i$.\\
We have \mbox{$|K|=k+1$.}\\
We want to show that 
$$(\star) \quad (k+1) \mu_i(X_i) \geq \sum_{j \in K} \mu_i(X_j).$$
We set $J_l=K \setminus \{j_l\}$, where $l=1, \ldots, k$. 
As the division is $k$-proportional, we have for all $l=1, \ldots, k$
$$k \mu_i(X_i) \geq \sum_{j \in J_l} \mu_i(X_j).$$
When we sum all these inequalities we get
$$
k^2 \mu_i(X_i) \geq \sum_{l=1}^{k} \sum_{j \in J_l} \mu_i(X_j) =
k \mu_i(X_i)+ (k-1) \mu_i(X_{j_1}) + \cdots + (k-1) \mu_i(X_{j_k}).
$$
Thus 
$$(k^2-1)\mu_i(X_i) \geq (k-1) \mu_i(X_i) + (k-1)\mu_i(X_{j_1}) + \cdots + (k-1)\mu_i(X_{j_k}),$$
 and this implies the inequality $(\star)$.
\end{proof}

In the following, we are going to  represent the result of a division with a \emph{sharing matrix}.

\begin{Def}
A matrix $M = \big(\mu_i(X_j)\big)$ is said to be a sharing matrix when \mbox{$X=\sqcup_{j=1}^n X_j$} is a partition of $X$.
\end{Def}

\noindent\textbf{Example:}
Let $X=\sqcup_{i=1}^4 X_i$  be a division of $X$ between four players. If the sharing matrix associated to this partition is given by
$$M=\begin{pmatrix}
1/3 & 0 &0 & 2/3\\
0& 1/3 &0 &2/3\\
0&0& 1/3 & 2/3\\
1/3 &0&0& 2/3
\end{pmatrix}
$$
then this means that $\mu_1(X_1)=1/3$, $\mu_1(X_2)=\mu_1(X_3)=0$ and $\mu_1(X_4)=2/3$.\\
We can remark that this division is 4-proportional, ie proportional, it is also 3-proportional, but it is not 2-proportional. Indeed, the first three players envie the last player.

\section{$k$-proportionality, equitability and connected pieces}\label{sec:equitable}
In \cite{Cechexistence}, K.~Cechl{\'a}rov{\'a}, J.~Dobo{\v s}, and E.~Pill{\'a}rov{\'a} have shown that for all measures $\mu_1$, \ldots, $\mu_n$ there always exist a proportional and equitable division of $[0,1]$ with connected pieces $X_i$. We remark that this result is still valid when we consider a pie, i.e. $X=S^1$. Indeed, with one cut we restrict the study of $X=S^1$ to the case where $X=[0,1]$.\\
However, when  we consider envy-free division of $S^1$ we have the opposite result. Indeed, J. Barbanel, S. Brams and W. Stromquist have shown in \cite[Theorem 3.2]{Barbanel_Brams_Stromquist} that there exists measures $\mu_1$, \ldots, $\mu_n$, for which no division of the pie $S^1$ with connected components, is envy-free and equitable. In \cite{BJK2006}, S.~Brams, M.~Jones and C.~Klamler have shown that  there exists measures $\mu_1$, $\mu_2$, $\mu_3$, for which no division of the cake $[0,1]$ with connected components, is envy-free and equitable. \\
Thus, there is an obstruction when we go from the proportional division to the envy-free one. The question is then:  Where is this obstruction when we consider $k$-proportional division ?
 
\begin{Thm}
When $n \geq 5$, there exists measures $\mu_1$, \ldots $\mu_n$ for which no division of the pie $X=S^1=X_1 \sqcup \ldots \sqcup X_n$ with connected components $X_i$, is $(n-1)$-proportional and equitable.
\end{Thm}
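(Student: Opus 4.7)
The plan is to exhibit explicit measures $\mu_1,\ldots,\mu_n$ on $S^1$ for which no connected, $(n-1)$-proportional and equitable division exists. The starting point is a reformulation: applying Lemma~\ref{lem:J'} with $J'=\{1,\ldots,n\}\setminus\{i,m\}$ and using $\sum_j\mu_i(X_j)=1$, a division is $(n-1)$-proportional if and only if $\mu_i(X_m)\geq 1-(n-1)\mu_i(X_i)$ for every pair $i\neq m$. Combined with equitability $\mu_i(X_i)=v$, this becomes the family of inequalities $\mu_i(X_m)\geq 1-(n-1)v$ for all $i\neq m$.

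The simplest asymmetric construction I would try is the following: let $\mu_1,\ldots,\mu_{n-1}$ all be the Lebesgue measure on $S^1$, and let $\mu_n$ be the uniform measure on a short arc $A\subset S^1$ of length $a$, with $a>0$ chosen small enough (for instance $a<1/(n-1)$). In any equitable connected division, the $n-1$ pieces assigned to the uniform players have Lebesgue length exactly $v$, so the piece $X_n$ assigned to player $n$ has length $1-(n-1)v$; the equitability condition for $\mu_n$ reads $|X_n\cap A|=va$, and the trivial bound $|X_n\cap A|\leq |X_n|$ yields $v(a+n-1)\leq 1$, i.e.\ the strict inequality $v\leq 1/(a+n-1)<1/(n-1)$. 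On the other hand, because the cuts of the partition are separated by the piece lengths (which, by iterated Lemma~\ref{lem:k-k+1} and the proportionality it implies, satisfy $v\geq 1/n$), the short arc $A$ contains at most two cuts once $a$ is small enough, so at most three pieces of the partition intersect $A$. For $n\geq 5$ this leaves at least one piece $X_m$ with $m\neq n$ entirely disjoint from $A$, so that $\mu_n(X_m)=0$; the reformulated $(n-1)$-proportional condition applied to $(n,m)$ then forces $v\geq 1/(n-1)$, contradicting the upper bound above.

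The main obstacle is a careful case analysis that covers every branch uniformly. One must enumerate the possible geometric positions of $A$ relative to $X_n$ (namely $X_n\subset A$, partial overlap on one side, or $A\cap X_n=\emptyset$, the last case already ruling out equitability by forcing $v=0$) and treat the degenerate assignment in which player $n$ is given a piece of length $v$ rather than the odd piece of length $1-(n-1)v$. The latter forces $v=1/n$, and combined with the $(n-1)$-proportionality inequalities and $\sum_m\mu_n(X_m)=1$, the equality case of the averaging bound imposes the exact condition $\mu_n(X_m)=1/n$ for every $m$; this in turn would require $A$ to be crossed by all cut points of the partition, which is ruled out whenever $a$ is sufficiently small by the same length count. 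Verifying that every branch leads to the same contradiction is the technical heart of the proof.
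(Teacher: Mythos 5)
Your proposal is correct, but it follows a genuinely different construction from the paper's. The paper also argues by exhibiting explicit measures, but it uses $n-2$ Lebesgue players together with two players whose densities are $\frac{6}{5}\mathcal{X}_{S^1\setminus[0,1/6]}$ and $\frac{6}{5}\mathcal{X}_{S^1\setminus[1/2,2/3]}$; Lemma~\ref{lem:J'} applied with $J'=\{3,\dots,n\}$ forces each Lebesgue player's arc to meet both notches, hence to have length at least $1/3$, and the $n-2\geq 3$ such arcs overfill the circle. You instead use $n-1$ Lebesgue players and one player concentrated on a short arc $A$, and your contradiction comes from the pairwise form $\mu_i(X_m)\geq 1-(n-1)\mu_i(X_i)$ of Lemma~\ref{lem:J'}: equitability forces $|X_n|=1-(n-1)v$ and $|X_n\cap A|=va$, whence $v\leq 1/(a+n-1)<1/(n-1)$, while any $X_m$ with $m\neq n$ disjoint from $A$ gives $0=\mu_n(X_m)\geq 1-(n-1)v$, i.e.\ $v\geq 1/(n-1)$. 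Two details to tighten. First, the cut points are not all separated by at least $1/n$: the piece $X_n$ has length $1-(n-1)v\leq 1/n$, possibly tiny, so its two endpoints can be arbitrarily close; but among $j$ cuts interior to $A$ at most one of the $j-1$ enclosed complete pieces is $X_n$ and the rest have length $v\geq 1/n$, so $(j-2)/n\leq a$ and $j\leq 2$ once $a<1/n$, which still yields at most three pieces meeting $A$ and hence, for $n\geq 5$, at least two pieces missing $A$, one of which is some $X_m$ with $m\neq n$. Second, your ``degenerate assignment'' branch is vacuous: $X_n$ is by definition the arc allocated to player $n$, its length is already forced to equal $1-(n-1)v$ once the other $n-1$ lengths equal $v$, and the position of $A$ relative to $X_n$ never matters because $|X_n\cap A|\leq|X_n|$ holds in every configuration; so essentially no case analysis is needed. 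Your route needs only one non-uniform valuation and isolates the pairwise reformulation of $(n-1)$-proportionality cleanly; the paper's uses densities within a factor $6/5$ of uniform, which shows the impossibility already occurs for nearly identical tastes.
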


 When we start with a proportional division and want to get closer to an envy-free division, the first step is the $(n-1)$-proportional division, the second the \mbox{$(n-2)$-proportional} division, and so on. Thus, this theorem says that the impossibility result appears in the first step.

\begin{proof}
This proof is a generalisation of the one given in \cite{Barbanel_Brams_Stromquist}.\\
In this proof, we consider $S^1$ as the interval $[0,1]$ where $0$ and $1$ are identified.\\
We consider the measures $\mu_1$, \ldots, $\mu_n$ on $S^1$ defined with the following corresponding densities:
$f_1(x)=\dfrac{6}{5} \mathcal{X}_{S^1\setminus [0,1/6]}$, $f_2(x)= \dfrac{6}{5} \mathcal{X}_{S^1\setminus [1/2,2/3]}$ and $f_i(x)=1$, for \mbox{$i=3, \ldots, n$}.\\
We suppose that a $(n-1)$-proportional, equitable and connected division \linebreak \mbox{$X=X_1 \sqcup \ldots \sqcup X_n$} exists and we are looking for a contradiction.\\

First, we claim that for $i=3, \ldots, n$, we have: 
$$X_i \cap [0, 1/6] \neq \emptyset \quad \textrm{ and } X_i \cap [1/2, 2/3] \neq \emptyset.$$
Indeed, for $i=3, \ldots, n$,  if $X_i \subset [1/6,1]$ then $\mu_1(X_i) >\mu_i(X_i)$. As \linebreak $\mu_i(X_i)=\mu_1(X_1)$, since the division is equitable, we get $\mu_1(X_i) >\mu_1(X_1)$.\\
Thus $\mu_1(X_3)+\cdots +\mu_1(X_n) > (n-2)\mu_1(X_1)$, but, by Lemma~\ref{lem:J'}, this is impossible since the division is $(n-1)$-proportional. Therefore $X_i \cap [0, 1/6] \neq \emptyset$\\
For the same reason we have $X_i \cap [1/2, 2/3] \neq \emptyset$.\\

Now,  as $X_i \cap [0, 1/6] \neq \emptyset$ and $X_i \cap [1/2, 2/3] \neq \emptyset$ we deduce that \linebreak $lg(X_i)\geq 1/3$, where $lg(X_i)$ is the length of $X_i$.\\
Thus $lg(X_3 \sqcup \ldots \sqcup X_n) \geq (n-2)/3$, because $X_i \cap X_j = \emptyset$ when $i \neq j$.\\
At last, as $n \geq 5$, the previous inequality gives $lg(X_3 \sqcup \ldots \sqcup X_n) \geq 1$.\\
If $n>5$, then inequality is strict and we have obtained the desired contradiction.\\
If $n=5$, then $lg(X_3 \sqcup \ldots \sqcup X_n) = 1$ and  this implies  $\mu_1(X_1)=0$. As the division is $(n-1)$-proportional, thus by Lemma~\ref{lem:k-k+1} $n$-proportional, we get $\mu_1(X_1) \geq 1/n$ and we get the desired contradiction.
\end{proof}

\section{$k$-proportionality, Pareto optimality and connected pieces}\label{sec:pareto}

S.~Banach and B.~Knaster have shown that for all  measures $\mu_1$, \ldots, $\mu_n$, there always exist a proportional division of $X$ with connected pieces $X_i$, see \cite{DubinsSpanier}. Therefore, the set of all partitions giving a connected proportional division  of the cake $X=[0,1]$ is  non-empty. Moreover, this set is compact. Thus, a connected proportional division for which the sum $\sum_i \mu_i(X_i)$ is maximum exists.\\
This partition then yields a Pareto-optimal, connected proportional division.
However, it is not always possible to obtain a Pareto-optimal, connected envy-free division, see \cite[Theorem 2.2]{Barbanel_Brams_Stromquist}.\\
Once again, we can ask where on our scale this impossibility first appears. The following theorem shows that, here again, the impossibility appears as soon as we have a $(n-1)$-proportional division.

\begin{Thm}
There exist measures for which no partition of the cake $X=[0,1]=\sqcup_{i=1}^n X_i$, with connected components $X_i$,  is $(n-1)$-proportional and Pareto optimal.
\end{Thm}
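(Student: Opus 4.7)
The plan is to mirror the strategy of Theorem~2.2 of Barbanel--Brams--Stromquist (which handles the envy-free case $k=2$) and generalize it to $k=n-1$ through a suitable choice of measures. The natural try is to recycle the densities used in Section~\ref{sec:equitable}: take $f_1=(6/5)\mathcal{X}_{[0,1]\setminus I_1}$ and $f_2=(6/5)\mathcal{X}_{[0,1]\setminus I_2}$ for two small sub-intervals $I_1,I_2\subset(0,1)$ (the ``dead zones'' of players $1$ and $2$), and $f_i\equiv 1$ for $i\geq 3$. Placing $I_1$ and $I_2$ in the interior of $(0,1)$ avoids pathologies at the endpoints of the cake.

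Step one is a standard marginal cut-shifting argument: in any Pareto-optimal connected partition, no cut between two consecutive pieces can lie strictly interior to a dead zone of the player on one of its sides, because sliding the cut through that zone would strictly benefit the neighbour without hurting the dead-zone player. Applied to the above measures, this forces $X_1\cap I_1=\emptyset$ and $X_2\cap I_2=\emptyset$, so the intervals $I_1$ and $I_2$ are entirely contained in pieces assigned to players in $\{3,\ldots,n\}$.

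Step two uses Lemma~\ref{lem:J'} to turn $(n-1)$-proportionality into quantitative length constraints. The inequality of the lemma rewrites as $(n-1)\mu_i(X_i)+\mu_i(X_k)\geq 1$ for every pair $i\neq k$, and therefore $\mu_i(X_i)\geq (1-\mu_i(X_k))/(n-1)$. For a uniform player this gives $|X_i|\geq (1-|X_k|)/(n-1)$; plugging in the two indices $k$ whose pieces carry $I_1$ and $I_2$ (hence are not too long, given the other uniform pieces satisfy the same bound) yields a minimum length for each uniform piece, and applying the same inequality with $\mu_1,\mu_2$ produces matching lower bounds on the lengths of $X_1,X_2$. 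Summing all these length lower bounds, the plan is to strictly exceed $\sum_j |X_j|=1$ for a suitable range of $n$, which is the desired contradiction.

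The main obstacle is the quantitative step: the widths of $I_1,I_2$, the plateau height $6/5$, and the resulting constants must all be tuned so that the lower bounds really do sum to more than~$1$, exactly in the spirit of the lower bound $n\geq 5$ appearing in the previous theorem. A secondary, more technical, issue is controlling the interaction between the two dead-zone constraints when the pieces containing $I_1$ and $I_2$ happen to coincide or share an endpoint; in that degenerate case the cut-shifting step may have to be replaced by an explicit swap of two adjacent pieces to witness the Pareto improvement.
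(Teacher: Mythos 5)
There is a genuine gap in Step two, and it is fatal to the proposed route. The system of length inequalities you extract from $(n-1)$-proportionality is \emph{consistent}: it can never sum to strictly more than $1$. Concretely, with your measures, take all pieces of length $1/n$, with $I_1$ contained in a single piece $X_m$ and $I_2$ in a single piece $X_{m'}$ ($m,m'\geq 3$), and $X_1\cap I_1=X_2\cap I_2=\emptyset$. For a uniform player the binding constraint $\mu_i(X_i)\geq (1-\mu_i(X_k))/(n-1)$ reads $\tfrac{1}{n}\geq \tfrac{1-1/n}{n-1}$, which holds with equality; for player $1$, since $lg(I_1)=1/6$ and the plateau height is $6/5$, one gets $\mu_1(X_1)=\tfrac{6}{5n}$ against $\tfrac{1-\mu_1(X_1)-\mu_1(X_m)}{n-2}=\tfrac{6}{5n}$, again equality (the choice $6/5$ and $1/6$ makes $5/6+1/6=1$, so player $1$'s constraint collapses to the uniform one). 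So an $(n-1)$-proportional connected division respecting your Step-one conclusions exists, and no contradiction of the form $\sum_j lg(X_j)>1$ is available. This is the structural difference from the equitable theorem of Section~\ref{sec:equitable}: there, equitability forces the uniform players' pieces to be long; here nothing does, and indeed the theorem cannot be proved by showing that $(n-1)$-proportional connected divisions fail to exist --- they do exist. The content of the statement is that every such division is Pareto-\emph{dominated}, so any proof must end by exhibiting a dominating allocation, which your plan never does.

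The paper's proof is organized exactly around this point. It uses a single non-uniform player, with density $2$ on $[0,1/2n]$, $0$ on $(1/2n,1/n)$ and $1$ on $[1/n,1]$, and all other players uniform. Summing the $(n-1)$-proportionality inequalities and observing that they must all be tight (otherwise $1-x_1>1-x_1$), it pins the division down to $\mu_i(X_i)=1/n$ for every $i$, in both cases of who receives the leftmost piece. It then writes down an explicit better allocation: player $1$ retreats to the spike $[0,1/2n]$, still worth $1/n$ to him, while the remaining length $1-\tfrac{1}{2n}$ is split evenly among the other $n-1$ players, each getting $\tfrac{1}{n-1}\bigl(1-\tfrac{1}{2n}\bigr)>\tfrac{1}{n}$. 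Your measures could plausibly support an analogous endgame (players $1$ and $2$ can reach value $1/n$ with length only $\tfrac{5}{6n}$), but you would first have to show that $(n-1)$-proportionality pins the division down tightly enough, and then construct the dominating allocation explicitly; as written, the proposal stops one essential idea short. (A secondary caveat: your cut-shifting argument in Step one only controls cuts lying strictly inside a dead zone, so it does not by itself exclude $I_1\subset \mathrm{int}(X_1)$; you need the transfer version of the argument, which uses that Pareto domination is not restricted to connected allocations.)
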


\begin{proof}
We consider $n$ players with associated measures $\mu_1$, $\mu_2$, \ldots, $\mu_n$.\\
The density on $[0,1]$ of $\mu_1$ is given by 
$$2 \times \mathcal{X}_{[0,1/2n]}(x)+\mathcal{X}_{[1/n,1]}(x).$$
The measures $\mu_2$, \ldots, $\mu_n$ are all equal to the Lebesgue measure  on $[0,1]$. Thus $\mu_i([a,b])=lg([a,b])=b-a$, for $i=2, \ldots,n$, where $lg([a,b])$ is the length of $[a,b]$.\\

A partition of the cake with connected components has the following form:\\ $X=\sqcup_{j=0}^{n-1} [x_j,x_{j+1}]$, where $x_0=0$, $x_{i+1} \geq x_i$ and $x_n=1$.\\
In the following, we are going to prove for these measures that: if we have a $(n-1)$-proportional division with connected pieces, then the division is not Pareto optimal.\\

First, we suppose that the first player gets the leftmost part of the cake.\\
 Thus $X_1=[0,x_1]$ and $[x_1,1]$ is divided between the last $n-1$ players. \\
 In this situation, for $i=2, \ldots, n-1$, we have:
$$(\star) \quad lg(X_i)=\dfrac{1-x_1}{n-1}.$$
Indeed, as the division is $(n-1)$-proportional, we have for $i=2, \ldots,n$: 
$$lg(X_i)=\mu_i(X_i) \geq \dfrac{\mu_i(X_2)+\cdots+\mu_i(X_n)}{n-1}=\dfrac{\mu_i(X\setminus X_1)}{n-1}=\dfrac{1-x_1}{n-1}.$$
Then, we sum all these inequalities  and if one of them is strict then we get
$$\sum_{i=2}^n lg(X_i)=lg([x_1,1])=1-x_1> 1-x_1.$$
As the previous inequality is impossible, this proves  $(\star)$.\\

Now, we prove that $x_1 = 1/n$.\\
As the division is $(n-1)$-proportional then, by Lemma~\ref{lem:k-k+1}, this division is also $n$-proportional and we get $lg(X_i)=\mu_i(X_i)\geq 1/n$, for $i=2, \ldots,n$. 
 Thus 
 $$1-x_1=lg([x_1,1])=lg(X_2)+\cdots+lg(X_n)\geq (n-1)/n.$$
This implies that  $x_1\leq 1/n$. Therefore, we just need to prove that $x_1<1/n$ is impossible.\\
Then, suppose that $x_1<1/n$.  \\
Thanks to $(\star)$, we get $x_2=x_1+\dfrac{1-x_1}{n-1}$. As $x_1<1/n$, we have  $\dfrac{1-x_1}{n-1} >\dfrac{1}{n}$. Thus $x_2>1/n$. Therefore $x_2, x_3, \ldots, x_n \in [1/n,1]$. By definition of $\mu_1$ and thanks to $(\star)$, we get $\mu_1(X_3)=\cdots=\mu_1(X_n)=\dfrac{1-x_1}{n-1}$.\\
Now, by Lemma~\ref{lem:J'}, the $(n-1)$-proportionality gives
$$\mu_1(X_1) \geq \dfrac{\mu_1(X_3)+\cdots+\mu_1(X_n)}{n-2}=\dfrac{1-x_1}{n-1}>\dfrac{1}{n}.$$
This gives a contradiction, because thanks to the definition of $\mu_1$, we have \mbox{$\mu_1(X_1)\leq 1/n$} since $x_1 <1/n$.\\

Therefore, if we have a $(n-1)$-proportional division and  the leftmost part is given to the first player then $x_1=1/n$. Thus, by definition of $\mu_i$ and thanks to $(\star)$, we get 
$$\mu_i(X_i)=1/n, \quad  \textrm{for } i=1, \ldots, n.$$
 However, this allocation is dominated by the division where the first player get $[0,1/2n]$ and $[1/2n,1]$ is divided between the other players into $n-1$ parts of the same length. More precisely,  we consider the following division $X=\sqcup_{i=1}^n \tilde{X}_i$ where 
$$\tilde{X}_1=\Big[0,\dfrac{1}{2n}\Big]$$
$$ \textrm{ and }\tilde{X}_i=\Big[ \dfrac{1}{2n} +\dfrac{i-2}{n-1}\Big( 1-\dfrac{1}{2n}\Big),  \dfrac{1}{2n} +\dfrac{i-1}{n-1}\Big( 1-\dfrac{1}{2n}\Big) \Big], \textrm{ for }i=2, \ldots,n.$$\\
With this division  we have $\mu_1(\tilde{X}_1)=1/n$ but $\mu_i(\tilde{X}_i)= \dfrac{1}{n-1}\Big(1-\dfrac{1}{2n}\Big)>\dfrac{1}{n}$, for $i=2, \ldots, n$.\\
Thus, if we want to get a Pareto optimal division, then the leftmost part cannot be given to the first player.\\

Now, we suppose that the leftmost part of the cake is not given to the first player.  For example, the leftmost piece $[0,x_1]$ is given to the last player.
As the division is \mbox{$(n-1)$-proportional}, it is also $n$-proportional and we get: 
$$\mu_n(X_n)=lg([0,x_1]) \geq 1/n.$$
 Thus $x_1\geq 1/n$.\\
 Furthermore, on $[x_1,1]$ all the measures coincides with the Lebesgue mesure. Therefore,  using $(n-1)$-proportionality as before, we can show that $lg(X_i)=\dfrac{1-x_1}{n-1}$ for $i=1,\ldots, n-1$.
As the division is $n$-proportional, we get 
$$\dfrac{1-x_1}{n-1} \geq \dfrac{1}{n}\Rightarrow \dfrac{1}{n} \geq x_1.$$
Thus, $x_1=1/n$ and $\mu_i(X_i)=1/n$ for $i=1, \ldots,n$.\\
As before, this allocation is dominated by the division $X=\sqcup_{i=1}^n \tilde{X}_i$ given previously.
\end{proof}

\section{Strong $k$-proportionality}\label{sec:strong}
In this section, we are going to prove that with  $k$-proportionality we can give  a uniform statement for well known results about strong proportional and strong envy-free divisions.

First, we define the natural generalization of strong proportional and strong envy-free division.

\begin{Def}
We say that a division is \emph{strong $k$-proportional} if for all \mbox{$J \subset \{1, \ldots,n\}$} such that $|J|=k$ and all $i \in J$, we have 
$$\mu_i(X_i) > \dfrac{\sum_{j \in J} \mu_i(X_j)}{k}.$$
\end{Def}

Now we are going to prove the following:

\begin{Thm}\label{thm:strong_k-prop}
When consider $n$ measures $\mu_1$, \ldots, $\mu_n$, we have the following equivalence: there exists a strong $k$-proportional division of the cake for the measures \mbox{$\mu_1$,\ldots, $\mu_n$} if and only if
all subsets $S \subset \{\mu_1, \ldots, \mu_n\}$, such that $|S|=k$, contains two different measures.
 \end{Thm}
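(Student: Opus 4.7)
The plan is to prove the equivalence by handling the two directions separately.

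For necessity ($\Rightarrow$), I would argue by contradiction. Assume a strong $k$-proportional division $X=X_1\sqcup\cdots\sqcup X_n$ exists and yet some $k$ of the measures coincide, say $\mu_{i_1}=\cdots=\mu_{i_k}=\mu$. Applying the defining strict inequality at each $i_a\in J=\{i_1,\ldots,i_k\}$ with this very $J$ gives $\mu(X_{i_a})>\tfrac{1}{k}\sum_{b=1}^{k}\mu(X_{i_b})$. Summing over $a=1,\ldots,k$ yields $\sum_a\mu(X_{i_a})>\sum_b\mu(X_{i_b})$, which is absurd since both sides are the same sum. Hence every $k$-subset must contain two different measures.

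For sufficiency ($\Leftarrow$), the hypothesis is equivalent to each distinct measure appearing with multiplicity at most $k-1$. Group the players into equivalence classes $C_1,\ldots,C_m$ sharing common pairwise distinct measures $\nu_1,\ldots,\nu_m$, with sizes $n_\ell:=|C_\ell|\leq k-1$. Since $k\leq n$ and $n_\ell\leq k-1$ for every $\ell$, necessarily $m\geq 2$. I would then construct the division in two stages. First, produce a partition $X=P_1\sqcup\cdots\sqcup P_m$ among the classes satisfying the weighted strong envy-free inequalities
\[
\frac{\nu_\ell(P_\ell)}{n_\ell}>\frac{\nu_\ell(P_{\ell'})}{n_{\ell'}}\quad\text{for all }\ell\neq\ell'.
\]
To do this, start from a Dvoretzky--Wald--Wolfowitz exact division with proportions $(n_\ell/n)_\ell$, which makes every per-capita ratio equal to $1/n$, and then perform small simultaneous pairwise swaps: for each pair $\ell\neq\ell'$, since $\nu_\ell\neq\nu_{\ell'}$ one can find $A\subset P_{\ell'}$ and $B\subset P_\ell$ with $\nu_\ell(A)>\nu_\ell(B)$ and $\nu_{\ell'}(A)<\nu_{\ell'}(B)$; swapping them strictly separates both sides of the $(\ell,\ell')$ per-capita inequality, and taking all swap sizes small enough keeps every pairwise inequality strict. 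Second, within each $P_\ell$, apply Dubins--Spanier to the $m$ measures restricted to $P_\ell$ to obtain a subdivision into pieces $\{X_i\}_{i\in C_\ell}$ that are simultaneously exact for every measure: $\nu_{\ell'}(X_i)=\nu_{\ell'}(P_\ell)/n_\ell$ for all $i\in C_\ell$ and all $\ell'$.

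The verification is then a short counting argument. Fix $i\in C_\ell$ and a $(k-1)$-subset $J'\subset\{1,\ldots,n\}\setminus\{i\}$. Setting $p=|J'\cap C_\ell|$ and $q_{\ell'}=|J'\cap C_{\ell'}|$, the bound $n_\ell\leq k-1$ gives $p\leq n_\ell-1\leq k-2$ and hence $\sum_{\ell'\neq\ell}q_{\ell'}=k-1-p\geq 1$. Writing $R_\ell=\nu_\ell(P_\ell)/n_\ell=\mu_i(X_i)$ and $T_{\ell,\ell'}=\nu_\ell(P_{\ell'})/n_{\ell'}<R_\ell$, one obtains
\[
\sum_{j\in J'}\mu_i(X_j)=pR_\ell+\sum_{\ell'\neq\ell}q_{\ell'}T_{\ell,\ell'}<pR_\ell+\sum_{\ell'\neq\ell}q_{\ell'}R_\ell=(k-1)\mu_i(X_i),
\]
the strict inequality holding because at least one $q_{\ell'}$ is positive while every $T_{\ell,\ell'}<R_\ell$. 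By Lemma~\ref{lem:J'} this is exactly strong $k$-proportionality at $i$.

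The main obstacle is the first stage: establishing the weighted strong envy-free partition under the sole hypothesis of pairwise distinct class-measures. The unit-weight case ($n_\ell=1$ for all $\ell$) is the classical strong envy-free existence theorem; the weighted version needs careful bookkeeping of the simultaneous pairwise swaps so that an improvement made for one pair is not wiped out by an adjustment made for another pair, and this uniform smallness control is where the technical attention is required.
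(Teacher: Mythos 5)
Your forward direction is exactly the paper's argument and is correct. Your backward direction, however, rests on a first stage that you yourself flag as unresolved, and the gap there is genuine, not just bookkeeping. The problem is the existence of the swap sets: you need $A\subset P_{\ell'}$ and $B\subset P_\ell$ with $\nu_\ell(A)>\nu_\ell(B)$ and $\nu_{\ell'}(A)<\nu_{\ell'}(B)$, but the hypothesis $\nu_\ell\neq\nu_{\ell'}$ only guarantees that these two measures differ \emph{somewhere} on $X$; they may coincide on all of $P_\ell\sqcup P_{\ell'}$. In that case every candidate pair satisfies $\nu_\ell(A)=\nu_{\ell'}(A)$ and $\nu_\ell(B)=\nu_{\ell'}(B)$, so the two required inequalities are incompatible, and no redistribution confined to $P_\ell\sqcup P_{\ell'}$ can ever produce both $\nu_\ell(P_\ell)/n_\ell>\nu_\ell(P_{\ell'})/n_{\ell'}$ and $\nu_{\ell'}(P_{\ell'})/n_{\ell'}>\nu_{\ell'}(P_\ell)/n_\ell$ simultaneously. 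You would have to route mass through third pieces where the measures actually differ, and the ``uniform smallness'' of simultaneous pairwise swaps is then no longer a local matter. So the weighted strong envy-free partition you need is not established by your sketch.

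The paper closes precisely this gap with linear algebra rather than swaps: Barbanel's proper-matrix theorem (Theorem~\ref{thm:barbanel}) together with Lemma~\ref{lem:barbanel} produces, in one step, a partition of $X$ among all $n$ players with $\mu_i(X_i)\geq\mu_i(X_j)$ for all $j$ and strict inequality exactly when $\mu_i\neq\mu_j$; the proper-matrix condition handles all linear dependencies among the measures globally, which is what your pair-by-pair construction cannot do. Once such a partition exists, the conclusion follows from a one-line count: in any $k$-set $J\ni i$ the hypothesis forces some $j_0\in J$ with $\mu_{j_0}\neq\mu_i$, hence one strictly smaller term in $\sum_{j\in J}\mu_i(X_j)$, hence $\mu_i(X_i)>\frac{1}{k}\sum_{j\in J}\mu_i(X_j)$. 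Your class decomposition and the counting via Lemma~\ref{lem:J'} are correct but unnecessary: they are an elaboration of this last step, while the hard existence statement is exactly the part your proposal leaves open. If you want to salvage your two-stage architecture, the clean fix is to derive the weighted class-level partition from the same proper-matrix theorem applied to the distinct measures $\nu_1,\ldots,\nu_m$ with a suitably weighted proper matrix, rather than from ad hoc swaps.
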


When $k=n$, this theorem gives: There exists a strong proportional division if and only if two measures are differents. This gives the result given by L.~Dubins and E.~Spannier in \cite[Corollary 1.2]{DubinsSpanier}. An algorithm that provides, when it exists, a strong proportional division with connected parts has been given in \cite{Strong_prop}.\\
When $k=2$, this theorem gives: There exists a strong envy-free division if and only if  the measures are all different. This corresponds to the result given by J.~Barbanel, see \cite[Theorem 5.5]{Barbanel}.\\
We can sum up Theorem~\ref{thm:strong_k-prop} by saying that the more players there are with different tastes, the easier it is to avoid jealousy.\\

\textbf{Exemple}: Consider 6 measures $\mu_1$, \ldots, $\mu_6$ such that $\mu_1$, $\mu_2$ and $\mu_3$ are different, ie $| \{ \mu_1, \mu_2, \mu_3 \} |=3$. Moreover, suppose that $\mu_1=\mu_4$, $\mu_2=\mu_5$, $\mu_3=\mu_6$.
Theorem~\ref{thm:strong_k-prop} implies that there exists a strong 3-proportional division for these measures but there do not exist a strong-envy free divisions.\\

A part of the proof of this theorem relies on a theorem due to Barbanel. Barbanel's theorem uses the notion of \emph{proper matrices}.

\begin{Def}
Suppose that $Q=(q_{ij})$ is an $n \times n$ matrix of real numbers.\\
We say that $Q$ is a \emph{proper matrix} relatively to $\mu_1$, \ldots, $\mu_n$ if:
\begin{itemize}
\item each row of $Q$ sums to zero,
\item each column of $Q$ is consistent with the linear equations involving \mbox{$\mu_1$, \ldots, $\mu_n$}. That is to say, if $\sum_i \lambda_i \mu_i=0$ then $\sum_i \lambda_i q_{ij}=0$.
\end{itemize}
\end{Def}

Now we recall Barbanel's theorem, see~\cite[Theorem 4.18]{Barbanel}.
\begin{Thm}[Barbanel]\label{thm:barbanel}
If $Q=(q_{ij})$ is a proper matrix relatively to $\mu_1$, \ldots, $\mu_n$, then for some $\varepsilon>0$ the matrix with coefficients $(\frac{1}{n} + \varepsilon q_{ij})$ is a sharing matrix.
\end{Thm}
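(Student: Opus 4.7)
The plan is to realise $\tfrac{1}{n}J+\varepsilon Q$ as a sharing matrix for small $\varepsilon>0$ by studying the set $\mathcal{M}\subset\RR^{n\times n}$ of all sharing matrices $(\mu_i(X_j))$. I would combine three ingredients: (a) $\mathcal{M}$ is convex and compact; (b) its affine hull $A$ is cut out by precisely the linear conditions appearing in the definition of a proper matrix; (c) the ``exact'' matrix $\tfrac{1}{n}J$ lies in the relative interior of $\mathcal{M}$ in $A$. Once these are in hand the theorem is immediate: $\tfrac{1}{n}J+Q$ lies in $A$ by the hypotheses on $Q$, so the segment from $\tfrac{1}{n}J$ towards $\tfrac{1}{n}J+Q$ stays in $\mathcal{M}$ for a short distance, which is exactly the conclusion with $\varepsilon$ equal to that distance.

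For (a), I would enlarge to fractional partitions $(p_1,\dots,p_n)$ with $p_j:X\to[0,1]$ and $\sum_j p_j\equiv 1$, mapped to $(\int p_j\,d\mu_i)$. This enlarged image is visibly convex, and the Dvoretzky--Wald--Wolfowitz theorem (equivalently Lyapunov's theorem applied to the non-atomic vector measure $(\mu_1,\dots,\mu_n)$) identifies it with $\mathcal{M}$. For (b), let $V=\{\lambda\in\RR^n : \sum_i\lambda_i\mu_i=0\}$; applying such a $\lambda$ to $X$ forces $\sum_i\lambda_i=0$. Let $A$ be the affine subspace defined by ``row sums equal $1$'' and ``$\sum_i\lambda_i M_{ij}=0$ for every $\lambda\in V$ and every column $j$''; clearly $\mathcal{M}\subset A$. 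The two defining properties of a proper matrix say exactly that $\tfrac{1}{n}J+Q\in A$: row sums equal $1+0=1$, and for $\lambda\in V$,
$$\sum_i\lambda_i\Big(\tfrac{1}{n}+q_{ij}\Big)=\tfrac{1}{n}\sum_i\lambda_i+\sum_i\lambda_i q_{ij}=0+0=0.$$

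The core of the argument is (c). Consider the linear map $\Psi:\{(f_j)\in L^\infty(X)^n : \sum_j f_j=0\}\to\RR^{n\times n}$, $\Psi(f)=(\int f_j\,d\mu_i)$. Its image lies in the tangent space $W:=A-\tfrac{1}{n}J$; for the reverse inclusion I would compute the orthogonal complement. If $c=(c_{ij})$ pairs to $0$ with every $\Psi(f)$, then the signed measures $\nu_j:=\sum_i c_{ij}\mu_i$ satisfy $\sum_j\int f_j\,d\nu_j=0$ for every admissible $f$. Choosing $f_1=g$, $f_2=-g$, $f_k=0$ for $k\geq 3$, and permutations thereof, forces $\nu_1=\cdots=\nu_n$, so $(c_{ij}-c_{i1})_i\in V$ for each $j$; this is exactly the condition that $c$ be a linear combination of the forms defining $W$, i.e.\ $c\in W^\perp$. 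Hence $\Psi$ surjects onto $W$. Given any $M^*\in W$, pick $f$ with $\Psi(f)=M^*$ and set $p_j:=\tfrac{1}{n}+tf_j$; this is a valid fractional partition for $t>0$ small enough, realising $\tfrac{1}{n}J+tM^*\in\mathcal{M}$. Applying this with $M^*=Q$ yields the theorem.

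The main obstacle is step (c): proving that $\Psi$ surjects onto the entire tangent space $W$, rather than some proper subspace, is what places $\tfrac{1}{n}J$ in the relative interior of $\mathcal{M}$. A direct dimension count is delicate when the $\mu_i$ are linearly dependent, but the duality computation through the signed measures $\nu_j$ handles all such dependencies uniformly, so one never has to isolate a basis of $V$.
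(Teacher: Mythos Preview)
The paper does not actually prove this theorem; it is quoted from Barbanel's monograph \cite[Theorem~4.18]{Barbanel} as a black box and then applied in the proof of the strong $k$-proportionality characterisation. There is therefore no ``paper's own proof'' to compare your argument against.

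That said, your argument is sound and follows the standard route to this result: Lyapunov/Dvoretzky--Wald--Wolfowitz gives convexity of $\mathcal{M}$, and the duality computation through the signed measures $\nu_j=\sum_i c_{ij}\mu_i$ correctly identifies the annihilator of $\mathrm{Im}\,\Psi$ with $W^\perp$, hence $\mathrm{Im}\,\Psi=W$. One minor remark on presentation: the framing via ``affine hull'' and ``relative interior'' is the right intuition but is not strictly needed in your write-up --- once you have $f$ with $\Psi(f)=Q$, the fractional partition $p_j=\tfrac{1}{n}+tf_j$ (valid for $0<t\le \tfrac{1}{n}\max_j\|f_j\|_\infty^{-1}$) already realises $\tfrac{1}{n}J+tQ$, and DWW converts it to a genuine partition. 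The claim in~(b) that $A$ \emph{equals} the affine hull of $\mathcal{M}$ is true and does follow from~(c), but you never actually use the full equality, only the containment $\mathcal{M}\subset A$ and the membership $\tfrac{1}{n}J+Q\in A$.
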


We denote by  $E$  the matrix with all coefficients equal to $1/n$. This matrix is the sharing matrix associated to an exact division, ie $\mu_i(X_j)=1/n$.  Therefore, Barbanel's theorem says that for some $\varepsilon>0$ the matrix $E+\varepsilon Q$ is  a sharing matrix. Thus there exists a partition $X=\sqcup_{j=1}^n X_j$ such that we have the matrix equality $\big( \mu_i(X_j)\big)=E+\varepsilon Q$. Then $E +\varepsilon Q$ is just the sharing matrix associated to a small perturbation of an exact division.\\

Furthermore, we also have the following lemma, see~\cite[Lemma 5.6]{Barbanel}.

\begin{Lem}\label{lem:barbanel}
There exists a proper matrix $Q=(q_{ij})$ such that for all $i,j=1, \ldots,n$:
$$q_{ii} \geq q_{ij} \textrm{ with equality holding if and only if } \mu_i=\mu_j.$$ 
\end{Lem}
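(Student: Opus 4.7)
The plan is to give an explicit construction of $Q$ by exploiting a Hilbert-space structure on the finite-dimensional subspace $V' = \mathrm{span}(\mu_1, \ldots, \mu_n)$ of signed measures. I equip $V'$ with any positive-definite inner product $\langle\cdot,\cdot\rangle$ (since $V'$ is finite-dimensional, such always exists; one may, for instance, use the $L^2$-inner product of densities). The only feature I need is that $\|\mu_i-\mu_j\|^2=0$ if and only if $\mu_i=\mu_j$.

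With the inner product fixed, I set
\[
q_{ij} := -\|\mu_i - \mu_j\|^2 + \frac{1}{n}\sum_{k=1}^n \|\mu_i - \mu_k\|^2.
\]
Then $q_{ii}-q_{ij}=\|\mu_i-\mu_j\|^2$, which is non-negative and vanishes exactly when $\mu_i=\mu_j$, yielding the desired diagonal-dominance inequality. Row sums are immediate: $\sum_j q_{ij} = -\sum_j\|\mu_i-\mu_j\|^2 + \sum_k\|\mu_i-\mu_k\|^2 = 0$.

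The remaining point is column consistency. Expanding via $\|\mu_i-\mu_j\|^2 = \|\mu_i\|^2 - 2\langle\mu_i,\mu_j\rangle + \|\mu_j\|^2$, the quadratic-in-$\mu_i$ terms $\pm\|\mu_i\|^2$ cancel between the two summands defining $q_{ij}$, leaving
\[
q_{ij} = 2\langle\mu_i,\,\mu_j-\bar\mu\rangle + c_j,
\]
where $\bar\mu=(1/n)\sum_k\mu_k$ and $c_j$ depends only on $j$. For any $\lambda\in\RR^n$ satisfying $\sum_i\lambda_i\mu_i=0$, bilinearity then gives
\[
\sum_i\lambda_i q_{ij} = 2\Bigl\langle \sum_i\lambda_i\mu_i,\; \mu_j-\bar\mu\Bigr\rangle + c_j\sum_i\lambda_i = 0,
\]
once I note that $\sum_i\lambda_i=0$ is forced by integrating the relation $\sum_i\lambda_i\mu_i=0$ over $X$ and using $\mu_i(X)=1$.

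The main obstacle, and the reason this construction is less obvious than it looks, is precisely the above cancellation of the $\|\mu_i\|^2$ terms. Naive candidates such as $-\|\mu_i-\mu_j\|^2$ alone, or a recentered Gram-matrix definition like $\langle\mu_i,\mu_j-\bar\mu\rangle$, satisfy at most two of the three requirements. Subtracting the per-row average of squared distances is the single adjustment that simultaneously zeros out the row sums, removes the nonlinear-in-$\mu_i$ dependence that would otherwise break column consistency, and preserves the clean identity $q_{ii}-q_{ij}=\|\mu_i-\mu_j\|^2$ characterising the equality case.
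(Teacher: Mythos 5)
Your construction is correct, and it is worth noting that the paper itself does not prove this lemma at all: it simply cites \cite[Lemma 5.6]{Barbanel}, so your argument supplies a self-contained proof where the paper offers only a reference. All three required properties check out. The identity $q_{ii}-q_{ij}=\|\mu_i-\mu_j\|^2$ gives the diagonal dominance with the correct equality case (positive definiteness of the inner product on the span is exactly what makes $\|\mu_i-\mu_j\|=0$ equivalent to $\mu_i=\mu_j$); the row sums telescope to zero; and the expansion $q_{ij}=2\langle\mu_i,\mu_j-\bar\mu\rangle+c_j$ together with the observation that $\sum_i\lambda_i\mu_i=0$ forces $\sum_i\lambda_i=0$ (evaluate at $X$, using $\mu_i(X)=1$) gives column consistency by bilinearity. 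Your closing remark correctly identifies why the naive candidates fail: the affine-in-$\mu_i$ form is what makes the column condition work, and subtracting the row average of squared distances is what restores the zero row sums without disturbing the differences $q_{ii}-q_{ij}$. One small caveat: the parenthetical suggestion to use the $L^2$-inner product of Lebesgue densities is not automatically licensed, since the densities are only assumed integrable; if you want a concrete inner product, take densities with respect to $\nu=\frac1n\sum_k\mu_k$ (these are bounded by $n$, hence in $L^2(\nu)$, and the form is positive definite on the span because every element of the span is absolutely continuous with respect to $\nu$). But your fallback --- any positive-definite inner product on a finite-dimensional real vector space exists --- already makes the argument airtight, so this is a matter of polish, not a gap.
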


\begin{proof}[Proof of Theorem~\ref{thm:strong_k-prop}]$\,$\\
$(\Rightarrow)$ We suppose that there exists a strong $k$-proportional division and that there exists a subset $S$ such that $|S|=k$ and all measures in $S$ are equal. We are going to prove that this situation is impossible.\\
In order to simplify the notations, we set $S=\{\mu_1, \ldots, \mu_k\}$. The strong $k$-proportionality gives 
$$\mu_{i}(X_{i}) >\dfrac{\mu_{i}(X_{1})+\cdots+\mu_{i}(X_k)}{k}= \dfrac{\mu_{i}(\sqcup_{j=1}^l X_{j})}{k}, \textrm{ for } i=1, \ldots, k.$$
 The sum of all these inequalities and the equalities $\mu_{1}=\mu_{2}=\cdots=\mu_{k}$  give
$$\mu_{1}(\sqcup_{j=1}^k X_{j} ) > \mu_{1}(\sqcup_{j=1}^k X_{j} ).$$
This gives the desired contradiction.\\
$(\Leftarrow)$ 
Thanks to Theorem~\ref{thm:barbanel} and Lemma~\ref{lem:barbanel}, we can deduce that there exists a partition such that $\mu_i(X_i) \geq \mu_i(X_j)$ and $\mu_i(X_i)> \mu_i(X_j)$ if and only if $\mu_i \neq \mu_j$.\\
Consider a set $J$ such that $i \in J$ and $|J|=k$, then, by hypothesis, in this set there exist a measure $\mu_{j_0}$ such that $\mu_i \neq \mu_{j_0}$.  Thus $\mu_i(X_i) > \mu_i(X_{j_0})$. Therefore, in the sum $\sum_{j \in J} \mu_i(X_j)$, we have $\mu_i(X_i) \geq \mu_i(X_j)$ for all $j \in J$ and  there exist at least one term $\mu_i(X_{j_0})$ such that $\mu_i(X_i) > \mu_i(X_{j_0})$. \\
It follows that $\mu_i(X_i) > \frac{1}{k}\sum_{j \in J} \mu_i(X_j)$, thus the partition is strong $k$-proportional.
\end{proof}

\section{Conclusion}
In this article, we have introduced a scale that allows us to consider intermediate fair divisions between proportional division ($k=n$) and envy-free division $(k=2)$.\\
This has enabled us, for example, to give a general formulation for the characterization of strong proportional and strong envy-free division.
It has also enabled us to see that certain impossibility results appear when $k=n-1$. So, these impossibility results appear as soon as we no longer consider a proportional division but a division that is a little more demanding. More precisely, impossibility results arise when we simply want each group of $n-1$ players to consider the distribution to be proportional. \\

In the definition of proportional division, for each player we have an inequality that is independent of the measure of the other players. Indeed, we have $\mu_i(X_i)\geq 1/n$.
When we consider a $k$-proportional division with $k\leq n-1$, each player compares what he or she has received with a quantity that depends on what the others have received.
So what our theorems show, is that it's impossible to have certain divisions as soon as players take into account what the others have received. It is therefore not necessary to compare the value of one's share with that of the others (as in the case of envy-free division) to have a situation that may be impossible to resolve. As soon as we take into account the value of the other players, sharing may become impossible.\\
A moral would therefore be as follows: to avoid difficulties when sharing, each player must be unaware of what the others have received\ldots\\

The study we have begun here can be extended in many ways.
We can ask ourselves whether the problems arise systematically for $(n-1)$-proportional sharing or whether sometimes the difficulty comes later.
For example, there are algorithms for constructing a proportional and connected fair division, but we know that there are no algorithm for constructing a connected and envy-free division, see \cite{Stromquist}. Again, does the impossibility arise as soon as we consider a $(n-1)$-proportional division, or does it occur for some other value on our scale?\\
On the other hand, when we do not want $X_i$ to be connected, there exist envy-free division algorithms. One way of measuring the difficulty of envy-free division versus proportional division is to compare the complexities of the algorithms producing these divisions. Complexity counts the number of questions asked to the players. Even-Paz algorithm, see \cite{EvenPaz}, produces proportional divisions with a number of queries in $O(n\log(n))$. For envy-free division, Aziz-Mackenzie's algorithm  has a complexity in $O(n^{n^{n^{n^{n^n}}}})$, see \cite{AzizMackenzie}.\\
 The difference between these two complexities is huge and this leads to the folowing questions: What can we say about $k$-proportional sharing? \\

In conclusion, we hope that the new forms of sharing introduced between proportional division and envy-free division will enable us to better understand the gap between statements about proportional division and envy-free division.\\

\textbf{Conflict of interest:}
The author declare that he has no conflict of interest
related to this manuscript.\\

\textbf{Funding informations:} No funding was received for this article.


\bibliographystyle{plain}
\bibliography{cakebiblio}

\begin{thebibliography}{10}

\bibitem{AzizMackenzie}
H.~Aziz and S.~Mackenzie.
\newblock A discrete and bounded envy-free cake cutting protocol for any number
  of agents.
\newblock In {\em {IEEE} 57th Annual Symposium on Foundations of Computer
  Science, {FOCS} 2016, 9-11 October 2016}, pages 416--427, 2016.

\bibitem{Barbanel}
J.~Barbanel.
\newblock {\em The geometry of efficient fair division}.
\newblock Cambridge University Press, 2005.

\bibitem{Barbanel_Brams_Stromquist}
J.~Barbanel, S.~Brams, and W.~Stromquist.
\newblock Cutting a pie is not a piece of cake.
\newblock {\em The American Mathematical Monthly}, 116(6):496--514, 2009.

\bibitem{BJK2006}
S.~Brams, M.~Jones, and C.~Klamler.
\newblock Better ways to cut a cake.
\newblock {\em Notices of the American Mathematical Society},
  53(11):1314--1321.

\bibitem{BJK}
S.~Brams, M.~Jones, and C.~Klamler.
\newblock \emph{N}-person cake-cutting: There may be no perfect division.
\newblock {\em The American Mathematical Monthly}, 120(1):35--47, 2013.

\bibitem{BramsTaylorarticle}
S.~Brams and A.~Taylor.
\newblock An envy-free cake division protocol.
\newblock {\em The American Mathematical Monthly}, 102(1):9--18, 1995.

\bibitem{BramsTaylor}
S.~Brams and A.~Taylor.
\newblock {\em Fair division - from cake-cutting to dispute resolution}.
\newblock Cambridge University Press, 1996.

\bibitem{Cechexistence}
K.~Cechl{\'a}rov{\'a}, J.~Dobo{\v s}, and E.~Pill{\'a}rov{\'a}.
\newblock On the existence of equitable cake divisions.
\newblock {\em Information Sciences}, 228(Supplement C):239 -- 245, 2013.

\bibitem{Chevaleyre06}
Y.~Chevaleyre, P.~Dunne, U.~Endriss, J.~Lang, M.~Lemaître, N.~Maudet,
  J.~Padget, S.~Phelps, J.~Rodríguez-Aguilar, and P.~Sousa.
\newblock Issues in multiagent resource allocation.
\newblock {\em INFORMATICA}, 30:3--31, 2006.

\bibitem{Sziklai}
Dávid Csercsik, Balázs~R. Sziklai, and Sándor Imre.
\newblock Cake-cutting approach for privacy-enhanced base station sharing in a
  linear model of user assignment.
\newblock In {\em 2019 16th International Symposium on Wireless Communication
  Systems (ISWCS)}, pages 618--623, 2019.

\bibitem{DubinsSpanier}
L.~Dubins and E.~Spanier.
\newblock How to cut a cake fairly.
\newblock {\em The American Mathematical Monthly}, 68(1):1--17, 1961.

\bibitem{EdmondsPruhs}
J.~Edmonds and K.~Pruhs.
\newblock Cake cutting really is not a piece of cake.
\newblock {\em ACM Trans. Algorithms}, 7(4):51, 2011.

\bibitem{EvenPaz}
S.~Even and A.~Paz.
\newblock A note on cake cutting.
\newblock {\em Discrete Applied Mathematics}, 7(3):285 -- 296, 1984.

\bibitem{Strong_prop}
Z.~Jank{ó}, A.~Joó, E.~Segal-Halevi, and S.~Man.
\newblock On connected strongly-proportional cake-cutting.
\newblock {\em ECAI 2024}, 392:3356--3363, 2024.

\bibitem{Dynamic}
I.~Kash, A.~Procaccia, and N.~Shah.
\newblock No agent left behind: dynamic fair division of multiple resources.
\newblock In {\em International conference on Autonomous Agents and Multi-Agent
  Systems, {AAMAS} '13, Saint Paul, MN, USA, May 6-10, 2013}, pages 351--358,
  2013.

\bibitem{CHB}
A.~Mehra and A.~Psomas.
\newblock On hierarchies of fairness notions in cake cutting: From
  proportionality to super envy-freeness, 2025.

\bibitem{Pikhurko}
O.~Pikhurko.
\newblock On envy-free cake division.
\newblock {\em The American Mathematical Monthly}, 107(8):736--738, 2000.

\bibitem{Procacciachapter}
A.~Procaccia.
\newblock Cake cutting algorithms.
\newblock In F.~Brandt, V.~Conitzer, U.~Endriss, J.~Lang, and A.~D. Procaccia,
  editors, {\em Handbook of Computational Social Choice}, chapter~13. Cambridge
  University Press, 2016.

\bibitem{RobertsonWebb}
J.~Robertson and W.~Webb.
\newblock {\em Cake-cutting algorithms - be fair if you can}.
\newblock A {K} Peters, 1998.

\bibitem{Segal-Halevi_families}
E.~Segal-Halevi and S.~Nitzan.
\newblock Fair cake-cutting among families.
\newblock {\em Social Choice and Welfare}, 53:709–740, 2019.

\bibitem{Steinhaus}
H.~Steinhaus.
\newblock {The problem of fair division}.
\newblock {\em Econometrica}, 16(1):101--104, January 1948.

\bibitem{Stromquist}
W.~Stromquist.
\newblock Envy-free cake divisions cannot be found by finite protocols.
\newblock {\em Electr. J. Comb.}, 15(1), 2008.

\end{thebibliography}

\end{document}